\providecommand{\U}[1]{\protect\rule{.1in}{.1in}}
\newtheorem{theorem}{Theorem}
\newtheorem{lemma}[theorem]{Lemma}
\newtheorem{proposition}[theorem]{Proposition}
\newtheorem{remark}[theorem]{Remark}
\newenvironment{proof}[1][Proof]{\noindent\textbf{#1.} }{\ \rule{0.5em}{0.5em}}
\begin{document}

\title{On Symmetric SL-Invariant Polynomials in Four Qubits}   

\author{Gilad Gour}\email{gour@ucalgary.ca}
\affiliation{Institute for Quantum Information Science and 
Department of Mathematics and Statistics,
University of Calgary, 2500 University Drive NW,
Calgary, Alberta, Canada T2N 1N4} 
\author{Nolan R. Wallach}\email{nwallach@ucsd.edu}
\affiliation{Department of Mathematics, University of California/San Diego, 
        La Jolla, California 92093-0112}

\begin{abstract} 
We find the generating set of SL-invariant polynomials in four qubits that are also invariant under permutations of the qubits. The set consists of four polynomials of degrees 2,6,8, and 12, for which we find an elegant expression in the space of critical states. In addition, we show that the Hyperdeterminant in four qubits is the only SL-invariant polynomial 
(up to powers of itself) that is non-vanishing precisely on the set of generic states.
\end{abstract}

\maketitle

\section{Introduction}

With the emergence of quantum information science in recent years, much effort has been given to the study of entanglement~\cite{Hor09}; in particular, to its characterization, manipulation and quantification~\cite{Ple07}. It was realized that highly entangled states are the most desirable resources for many quantum information processing tasks.
While two-party entanglement was very well studied, entanglement in multi-qubit systems is far less understood.
Perhaps one of the reasons is that $n$ qubits (with $n>3$) can be entangled in an uncountable number of 
ways~\cite{Vid00,Ver03,GW} with respect to stochastic local operations assisted by classical communication (SLOCC). 
It is therefore not very clear what role 
entanglement measures can play in multi-qubits or multi-qudits systems unless they are defined operationally.
One exception from this conclusion are entanglement measures that are defined in terms of the absolute value of SL-invariant 
polynomials~\cite{GW,Ver03,Uh00,Luq03,Jens,W,Miy03,Cof00}.  

Two important examples are the 
concurrence~\cite{Woo98} and the square root of the 3-tangle (SRT)~\cite{Cof00}.
The concurrence and the SRT, respectively, are the only $\text{SL}(2,\mathbb{C})\otimes \text{SL}(2,\mathbb{C})$ and 
$\text{SL}(2,\mathbb{C})\otimes \text{SL}(2,\mathbb{C})\otimes\text{SL}(2,\mathbb{C})$ invariant measures of entanglement that are homogenous of degree 1. The reason is that in two and three qubit systems there is a unique homogeneous  SL-invariant polynomial (up to scalar multiple) such that all other homogeneous invariant polynomials are multiples of powers of it. However for 4 qubits or more, the picture is different since there are many algebraically independent homogenous SL invariant polynomials such as
the 4-tangle~\cite{Uh00} or the Hyperdeterminant~\cite{Miy03}. 

In this note, we find the generating set of all SL-invariant polynomials with the property that they are also invariant under any permutation of the four qubits. Such polynomials yield measure of entanglement that capture genuine 4 qubits entanglement. In addition, we show that the 4-qubit Hyperdeterminant~\cite{Miy03} is the only homogeneous SL-invariant polynomial (of degree 24) that is non-vanishing precisely on generic states. 

This note is written with a variety of audiences in mind. First and foremost are the researchers who study quantum entanglement. We have therefore endeavered to keep the mathematical prerequisates to a minimum and have opted for proofs that emphasize explicit formulas for the indicated $SL$-invariant polynomials. We are aware that there are shorter proofs of the main results using the improtant work of Vinberg. However, to us, the most important aspect of the paper is that the Weyl group of F4 is built into the study of entanglement for 4 qubits. Indeed, the well known result of Shepherd-Todd on the invariants for the Weyl group of F4 gives an almost immediate proof of Theorem 1. The referee has indicated a short proof of Theorem 5 using more algebraic geometry. Although our proof is longer, we have opted to keep it since it is more elementary. We should also point out that in the jargon of Lie theory the hyperdeterminant is just the discriminent for the symmetric space corresponing to SO(4;4).

\section{Symmetric Invariants}

Let $H_{n}=\otimes^{n}\mathbb{C}^{2}$ denote the space of $n$-qubits, and let $G=SL(2,\mathbb{C})^{\otimes n}$ act on $H_{n}$ by the tensor product action.
An $SL$-invariant polynomial, $f(\psi)$, is a polynomial in the components of the vector 
$\psi\in\mathcal{H}_n$, which is invariant under the action of the group $G$. That is, $f(g\psi)=f(\psi)$ for all $g\in G$. 
In the case of two qubits the SL invariant polynomials are polynomials in the $f_2$ given by the bilinear form $(\psi,\psi)$:
$$
f_2(\psi)\equiv (\psi,\psi)\equiv \langle\psi^{*}|\sigma_y\otimes\sigma_y|\psi\rangle\;\;,\;\;\psi\in\mathbb{C}^{2}\otimes\mathbb{C}^2\;,
$$
where $\sigma_y$ is the second $2\times 2$ Pauli matrix with $i$ and $-i$ on the off-diagonal terms.
Its absolute value is the celebrated concurrence~\cite{Woo98}. 

In the case of three qubits all SL invariant polynomials are polynomials in the $f_4$ described as follows:
$$
f_4(\psi)=\det\left[\begin{array}{cc} (\varphi_0,\varphi_0) & (\varphi_0,\varphi_1)\\ 
(\varphi_1,\varphi_0) &(\varphi_1,\varphi_1)\end{array}\right]\;,
$$
where the two qubits states $\varphi_i$ for $i=0,1$ are defined by the decomposition 
$|\psi\rangle=|0\rangle|\varphi_0\rangle+|1\rangle|\varphi_1\rangle$, and the bilinear form $(\varphi_i,\varphi_j)$ is defined above for two qubits.
The absolute value of $f_4$ is the
celebrated 3-tangle~\cite{CKW}. 

In four qubits, however, there are many independent $SL$-invariant polynomials and it is possible to show
that they are generated by four SL-invariant polynomials (see e.g.~\cite{GW} for more details and references).
Here we are interested in SL-invariant polynomials that are also invariant under the permutation of the qubits.

Consider the permutation group
$S_{n}$ acting by the interchange of the qubits. Let $\widetilde{G}$ be the
group $S_{n}\ltimes G.$ That is, the set $S_{n}\times G$ with multiplication
\[
(s,g_{1}\otimes g_{2}\otimes\cdots\otimes g_{n})(t,h_{1}\otimes h_{2}%
\otimes\cdots\otimes h_{n})=(st,g_{t1}h_{1}\otimes g_{t2}h_{2}\otimes
\cdots\otimes g_{tn}h_{n}).
\]
Then $\widetilde{G}$ acts on $H_{n}$ by these two actions. We are interested
in the polynomial invariants of this group action. 

One can easily check that $f_2$ and $f_4$ above are 
also $\widetilde{G}$-invariant. However, this automatic $\widetilde{G}$-invariance of $G$-invariants breaks down for
$n=4$. As is well known~\cite{GW}, the polynomials on $H_{4}$ that are invariant under $G
$ are generated by four polynomials of respective degrees $2,4,4,6$. For
$\widetilde{G}$ we have the following theorem:

\begin{theorem}
The $\widetilde{G}$-invariant polynomials on $H_{4}$ are generated by four
algebraically independent homogeneous polynomials $h_{1},h_{2},h_{3}$ and
$h_{4}$ of respective degrees $2,6,8$ and $12.$ Furthermore, 
the polynomials can be taken to be $\mathcal{F}_{1}(z),\mathcal{F}
_{3}(z),\mathcal{F}_{4}(z),\mathcal{F}_{6}(z)$ as given explicitly in Eq.~(\ref{fpoly}) of the proof.
\end{theorem}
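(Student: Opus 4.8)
The plan is to reduce the computation of the $\widetilde{G}$-invariants to the invariant theory of a finite reflection group acting on a four-dimensional Cartan subspace, and then to invoke the theorem of Shephard--Todd--Chevalley. First I would fix the $4$-dimensional ``diagonal'' subspace $\mathfrak{a}\subset H_{4}$ spanned by the critical-state normal forms $|0000\rangle+|1111\rangle$, $|0011\rangle+|1100\rangle$, $|0101\rangle+|1010\rangle$, $|0110\rangle+|1001\rangle$, so that a generic state is $G$-equivalent to a point of $\mathfrak{a}$ with coordinates $(a,b,c,d)$. Since $H_{4}$ is (an isotropy representation attached to) the symmetric space of $SO(4,4)$, the Kostant--Rallis form of the Chevalley restriction theorem applies: restriction to $\mathfrak{a}$ induces an isomorphism of graded algebras $\mathbb{C}[H_{4}]^{G}\xrightarrow{\ \sim\ }\mathbb{C}[\mathfrak{a}]^{W}$, where $W=N_{G}(\mathfrak{a})/Z_{G}(\mathfrak{a})$ is the little Weyl group. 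I would identify $W$ explicitly as the group of coordinate permutations of $(a,b,c,d)$ together with even sign changes, i.e.\ $W\cong W(D_{4})$ of order $192$ with fundamental degrees $2,4,4,6$; this reproduces the known degrees of the four generating $G$-invariants.

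Next I would analyze the residual action of the qubit permutations $S_{4}$ on $\mathfrak{a}$. The key observation is that $S_{4}$ acts on the three splittings of the four qubits into pairs $\{12|34\},\{13|24\},\{14|23\}$ through the quotient $S_{4}/V_{4}\cong S_{3}$, and that on $\mathfrak{a}$ this amounts to permuting the coordinates $b,c,d$ (modulo the $D_{4}$ sign relations) while fixing $a$. This $S_{3}$ is precisely the triality automorphism of the $D_{4}$ root system, and $W(F_{4})$ is exactly the extension of $W(D_{4})$ by triality, of order $6\cdot 192=1152$. I would therefore prove that the group generated on $\mathfrak{a}$ by $W(D_{4})$ together with the image of $S_{4}$ is the full Weyl group $W(F_{4})$, so that restriction upgrades to an isomorphism $\mathbb{C}[H_{4}]^{\widetilde{G}}\xrightarrow{\ \sim\ }\mathbb{C}[\mathfrak{a}]^{W(F_{4})}$.

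Because $W(F_{4})$ is a finite reflection group, the Shephard--Todd--Chevalley theorem then guarantees that $\mathbb{C}[\mathfrak{a}]^{W(F_{4})}$ is a free polynomial algebra on four algebraically independent homogeneous generators whose degrees are the fundamental degrees of $F_{4}$, namely $2,6,8,12$; pulling these back through the restriction isomorphism yields $h_{1},\dots,h_{4}$ of the asserted degrees. To obtain the explicit polynomials $\mathcal{F}_{1},\mathcal{F}_{3},\mathcal{F}_{4},\mathcal{F}_{6}$ of Eq.~(\ref{fpoly}), I would take the degree-$2$ invariant $f_{2}$ together with suitable elementary symmetric functions of the three triality-permuted quantities attached to the bipartitions, and then verify directly that these four expressions are $\widetilde{G}$-invariant and algebraically independent, for instance by checking that their Jacobian is generically nonzero on $\mathfrak{a}$.

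I expect the main obstacle to be the second step: rigorously pinning down the little Weyl group of the enlarged action and showing that the permutation action realizes the \emph{full} triality $S_{3}$, neither collapsing into $W(D_{4})$ nor overshooting $W(F_{4})$. This is exactly the point where the genuine $F_{4}$ structure, as opposed to a naive $S_{3}$-extension of symmetric functions, must be extracted from the geometry of the qubit permutations, and it is what makes the remark that ``the Weyl group of $F_{4}$ is built into the study of entanglement for four qubits'' into a theorem rather than a coincidence of degrees.
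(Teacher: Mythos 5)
Your proposal is correct in outline, and it is genuinely different from the proof the paper writes down: it is essentially the ``short proof'' via $F_{4}$ that the authors explicitly acknowledge in the introduction and deliberately decline to use. You obtain the restriction isomorphism $\mathbb{C}[H_{4}]^{G}\cong\mathbb{C}[\mathfrak{a}]^{W(D_{4})}$ from Kostant--Rallis/Vinberg theory for the symmetric space of $SO(4,4)$, identify the residual action of the qubit permutations (through $S_{4}/V_{4}\cong S_{3}$, since the Klein four-group acts trivially on the Cartan subspace) with triality, conclude $\widetilde{W}=W(F_{4})$, and read off the degrees $2,6,8,12$ from Shephard--Todd--Chevalley. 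The paper instead argues elementarily: it imports from \cite{GW} the fact that $G$-invariants restricted to $A$ are exactly the $W$-invariants, generated by $\mathcal{E}_{0},\dots,\mathcal{E}_{3}$; computes the matrices of the transpositions $\sigma_{1},\sigma_{2}$ on $A$ explicitly; trades $\mathcal{E}_{0}^{2}$ for $\mathcal{E}_{4}$ using $\sigma_{1}$-parity; counts invariants degree by degree up to $12$; exhibits $\mathcal{F}_{1},\mathcal{F}_{3},\mathcal{F}_{4},\mathcal{F}_{6}$ with nonvanishing Jacobian; and only then invokes Chevalley's theorem for finite reflection groups, never needing to recognize $\widetilde{W}$ as $W(F_{4})$ (that identification is relegated to a remark). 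Your route buys brevity and a conceptual explanation of why $F_{4}$ appears; the paper's buys self-containedness and explicit formulas. For the crux you flag --- that $W(D_{4})$ together with the image of $S_{4}$ is exactly $W(F_{4})$ --- the cleanest argument uses the paper's concrete generators: $\nu=\mathrm{diag}(1,1,1,-1)$ is an odd sign change, so $\langle W(D_{4}),\nu\rangle=W(B_{4})$ of order $384$, and adjoining the reflection $\tau=s_{\alpha}$ with $\alpha=\tfrac{1}{2}(u_{0}-u_{1}-u_{2}-u_{3})$ forces the full group, since $[W(F_{4}):W(B_{4})]=3$ is prime and $\tau\notin W(B_{4})$. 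Note that your order $1152$ is the value consistent with the paper's own remark identifying $\widetilde{W}$ with $W(F_{4})$; the figure $576$ stated in the paper's proof appears to be a slip.
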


\begin{proof}
To prove this result we will use some results from~\cite{GW}. Let
\[
u_{0}=\frac{1}{2}(\left\vert 0000\right\rangle +\left\vert 0011\right\rangle
+\left\vert 1100\right\rangle +\left\vert 1111\right\rangle ),
\]%
\[
u_{1}=\frac{1}{2}(\left\vert 0000\right\rangle -\left\vert 0011\right\rangle
-\left\vert 1100\right\rangle +\left\vert 1111\right\rangle ),
\]%
\[
u_{2}=\frac{1}{2}(\left\vert 0101\right\rangle +\left\vert 0110\right\rangle
+\left\vert 1001\right\rangle +\left\vert 1010\right\rangle ),
\]%
\[
u_{3}=\frac{1}{2}(\left\vert 0101\right\rangle -\left\vert 0110\right\rangle
-\left\vert 1001\right\rangle +\left\vert 1010\right\rangle ).
\]
Let $A$ be the vector subspace of $H_{4}$ generated by the $u_{j}$. Then $GA$
contains an open subset of $H_{4}$ and is dense. This implies that any
$G$-invariant polynomial on $H_{4}$ is determined by its restriction to $A$.
Writing a general state in $A$ as $z=\sum z_{i}u_{i}$, we can choose the invariant polynomials
such that their restrictions to $A$ are given by
\[
\mathcal{E}_{0}(z)=z_{0}z_{1}z_{2}z_{3},\;\;\;\mathcal{E}_{j}(z)=z_{0}^{2j}+z_{1}%
^{2j}+z_{2}^{2j}+z_{3}^{2j},\;\;j=1,2,3.
\]
In~\cite{GW} we give explicit formulas for their extensions to $H_{4}.$

Also, let $W$ be the group of transformations of $A$ given by $\{g\in
G|gA=A\}_{|A}$. \ Then $W$ is the finite group of linear transformations of
the form
\[
u_{i}\longmapsto\varepsilon_{i}u_{s^{-1}i}
\]
with $\varepsilon_{i}=\pm1$, $s\in S_{4}$ and $\varepsilon_{0}\varepsilon
_{1}\varepsilon_{2}\varepsilon_{3}=1$. One can show~\cite{W,GW} that every $W$-invariant
polynomial can be written as a polynomial in $\mathcal{E}_{0},\mathcal{E}
_{1},\mathcal{E}_{2},\mathcal{E}_{3}$. We now look at the restriction of the
$S_{4}$ that permutes the qubits to $A$. Set $\sigma_{i}=(i,i+1)$, $i=1,2,3$
where, say, $(2,3)$ corresponds to fixing the first and last qubit and
interchanging the second and third. Then they have matrices relative to the
basis $u_{j}:$

\[
\sigma_{1|A}=\left[
\begin{array}
[c]{cccc}%
1 & 0 & 0 & 0\\
0 & 1 & 0 & 0\\
0 & 0 & 1 & 0\\
0 & 0 & 0 & -1
\end{array}
\right]  ,
\]%
\[
\sigma_{2|A}=\frac{1}{2}\left[
\begin{array}
[c]{cccc}%
1 & 1 & 1 & 1\\
1 & 1 & -1 & -1\\
1 & -1 & -1 & 1\\
1 & -1 & 1 & -1
\end{array}
\right]  ,
\]%
\[
\sigma_{3|A}=\left[
\begin{array}
[c]{cccc}%
1 & 0 & 0 & 0\\
0 & 1 & 0 & 0\\
0 & 0 & 1 & 0\\
0 & 0 & 0 & -1
\end{array}
\right]  .
\]
Since $S_{4}$ is generated by $(1,2),(2,3),(3,4)$, it is enough to find those
$W$-invariants that are also $\sigma_{i|A}$ invariant for $i=1,2$. We note
that the only one of the $\mathcal{E}_{j}$ that is not invariant\ under
$\sigma_{1|A}$ is $\mathcal{E}_{0}$ and $\mathcal{E}_{0}(\sigma_{1}%
z)=-\mathcal{E}_{0}(z)$ for $z\in A$. Thus if $F(x_{0},x_{1},x_{2},x_{3})$ is
a polynomial in the indeterminates $x_{j}$ then $F(\mathcal{E}_{0}%
,\mathcal{E}_{1},\mathcal{E}_{2},\mathcal{E}_{3})$ is invariant under
$\nu=\sigma_{1|A} $ if and only if $x_{0}$ appears to even powers. It is an
easy exercise to show that if $\mathcal{E}_{4}=z_{0}^{8}+z_{1}^{8}+z_{2}%
^{8}+z_{3}^{8}$ then any polynomial in $\mathcal{E}_{0}^{2},\mathcal{E}%
_{1},\mathcal{E}_{2},\mathcal{E}_{3}$ is a polynomial in $\mathcal{E}%
_{1},\mathcal{E}_{2},\mathcal{E}_{3},\mathcal{E}_{4}$ and conversely (see the
argument in the very beginning of the next section). Thus we need only find
the polynomials in $\mathcal{E}_{1},\mathcal{E}_{2},\mathcal{E}_{3}%
,\mathcal{E}_{4}$ that are invariant under $\tau=\sigma_{2|A}$. A direct
calculation shows that $\mathcal{E}_{1}$ is invariant under $\tau$. Also,%

\[
\mathcal{E}_{1}^{2}\circ\tau=\mathcal{E}_{1}^{2}\;,\;\;\mathcal{E}_{2}\circ
\tau=\frac{3}{4}\mathcal{E}_{1}^{2}-\frac{1}{2}\mathcal{E}_{2}+6\mathcal{E}
_{0}\;,\;\;\mathcal{E}_{0}\circ\tau=-\frac{1}{16}\mathcal{E}_{1}^{2}+\frac{1}
{8}\mathcal{E}_{2}+\frac{1}{2}\mathcal{E}_{0}.
\]
Since $\mathcal{E}_{0},\mathcal{E}_{1}^{2},\mathcal{E}_{2}$ forms a basis of
the $W$ invariant polynomials of degree $4$ this calculation shows that the
space of polynomials of degree $4$ invariant under $\tau,\nu$ and $W$ (hence
under $\widetilde{W}$) consists of the multiples of $\mathcal{E}_{1}^{2}$. The
space of polynomials invariant under $W$ and $\nu$ and homogeneous of degree
$6$ is spanned by $\mathcal{E}_{1}^{3},\mathcal{E}_{1}\mathcal{E}
_{2}$, and $\mathcal{E}_{3}$. From this it is clear that the space of homogeneous degree 6 polynomials that are
invariant under $\widetilde{W}$ is two dimensional. Since $\mathcal{E}
_{1}^{3}$ is clearly $\widetilde{W}$-invariant there is one new invariant of
degree $6$. Continuing in this way we find that to degree 12 there are
invariants $h_{1},h_{2},h_{3},h_{4}$ of degrees $2,6,8$ and $12$, respectively,
such that: (1) the invariant of degree $8$, $h_{3}$, is not of the form $ah_{1}%
^{4}+bh_{1}h_{2}$, (2) there is no new invariant of degree $10$, and (3) the invariant of degree 12, $h_{4}$,
cannot be written in the form $ah_{1}^{6}+bh_{1}^{3}h_{2}+ch_{1}^{2}h_{3}$. To describe these invariants we write
out a new set of invariants. We put
\begin{equation}\label{fpoly}
\mathcal{F}_{k}(z)=\frac{1}{6}\sum_{i<j}\left(  z_{i}-z_{j}\right)
^{2k}+\frac{1}{6}\sum_{i<j}\left(  z_{i}+z_{j}\right)  ^{2k}.
\end{equation}
We note that $\mathcal{F}_{1}=\mathcal{E}_{1}$, $\mathcal{F}_{2}%
=\mathcal{E}_{1}^{2}$. A direct check shows that these polynomials are
invariant under $\widetilde{W}$. Since $\mathcal{F}_{3}(z)\neq c\mathcal{E}%
_{1}^{3}$ we can use it as the \textquotedblleft missing
polynomial\textquotedblright. If one calculates the Jacobian determinant of
$\mathcal{F}_{1}(z),\mathcal{F}_{3}(z),\mathcal{F}_{4}(z),\mathcal{F}_{6}(z)$
then it is not $0$. This implies that none of these polynomials can be
expressed as a polynomial in the others. Thus they can be taken to be
$h_{1},h_{2},h_{3},h_{4}$.

Let $A_{\mathbb{R}}$ denote the vector space over $\mathbb{R}$ spanned by the $u_{j}$. 
If $\lambda\in A_{\mathbb{R}}$ is non-zero then we set for $a\in A$, $s_{\lambda}a=a-\frac{2\left\langle
\lambda|a\right\rangle }{\left\langle \lambda|\lambda\right\rangle }\lambda$.
Then such a transformation is called a reflection. It is the reflection about
the hyperplane perpendicucular to $\lambda$. The obvious calculation shows
that $\nu=s_{u_{3}}$and if $\alpha=\frac{1}{2}(u_{0}-u_{1}-u_{2}-u_{3})$ then
$\tau=s_{\alpha}$. We note that $W$ is generated by the reflections
corresponding to $u_{0}-u_{1},u_{1}-u_{2},u_{2}-u_{3}$ and $u_{2}+u_{3}$. This
implies that the group $\widetilde{W}$ is generated by reflections. One also
checks that it is finite (actually of order $576$). The general theory (cf.~\cite{Hum}) implies that the algebra of
invariants is generated by algebraicly independent homogeneous polynomials.
Using this it is easy to see that $\mathcal{F}_{1}(z),\mathcal{F}%
_{3}(z),\mathcal{F}_{4}(z),\mathcal{F}_{6}(z)$ generate the algebra of
invariants.
\end{proof}

\begin{remark}
Alternatively, we note that $\widetilde{W}$ is isomorphic with the Weyl group
of the exceptional group $F_{4}$ (see Bourbaki, Chapitares 4,5, et 6 Planche
VIII pp. 272,273). The exponents (exposants on p.273) are 1,5,7,11. This
implies that the algebra of invariants is generated by algebraicly independent
homogeneous polynomials of degrees one more than the exponents so 2,6,8,12. We
also note that the basic invariants for $F_{4}$ were given as 
$\mathcal{F}_{1}(z),\mathcal{F}_{3}(z),\mathcal{F}_{4}(z),\mathcal{F}_{6}(z)$ for the
first time by M.L.Mehta, Comm. Algebra,16(1988), pp. 1083-1098.
\end{remark}

For $n\geq4$ qubits the analogue of the space $A$ would have to be of
dimension $2^{n}-3n.$ Thus even if there were a good candidate one would be
studying, say, for $5$ qubits, a space of dimension $17$ and an immense finite
group that cannot be generated by reflections.

\section{A special invariant (hyperdeterminat) for 4 qubits.}

In this section we show that the hyperdeterminant for qubits is the only polynomial that quantifies genuine 4-way generic entanglement.
We start by observing that Newton's formulas (relating power sums to
elementary symmetric functions) imply that if $a_{1},...,a_{n}$ are elements
of a an algebra over $\mathbb{Q}$ (the rational numbers) then
\[
a_{1}a_{2}\cdots a_{n}=f_{n}(p_{1}(a_{1},...,a_{n}),...,p_{n}(a_{1}
,...,a_{n}))
\]
with $f_{n}$ a polynomial with rational coefficients in $n$ indeterminates and
$p_{i}(x_{1},...,x_{n})=\sum x_{j}^{i}$. This says that in the notation of the
previous theorem
\[
\gamma(z)=
{\displaystyle\prod\limits_{i<j}}
(z_{i}-z_{j})^{2}(z_{i}+z_{j})^{2}
\]
is $\widetilde{W}$-invariant. Indeed, take $a_{1,...,}a_{n(n-1)}$ to be
$\{(z_{i}-z_{j})^{2}|i<j\}\cup\{(z_{i}+z_{j})^{2}|i<j\}$ in some order. We
will also use the notation $\gamma$ for the corresponding polynomial of degree
24 on $H_{4}$.

We define the generic set, $\Omega$, in $H_{4}$ to be the set of elements,
$v$, such that $\dim Gv$ is maximal (that is, $12$). Then every such element
can be conjugated to an element of $A$ by an element of $G$. It is easily
checked that
\[
\Omega\cap A=\{\sum z_{i}u_{i}|z_{i}\neq\pm z_{j}\text{ if }i\neq j\}.
\]
This implies that $\Omega=\{\phi\in H_{4}|\gamma(\phi)\neq0\}$.

\begin{proposition}
If $f$ is a polynomial on $H_{4}$that is invariant under the action of $G$ and
is such that $f(H_{4}-\Omega)=0$ then $f$ is divisible by $\gamma$.
\end{proposition}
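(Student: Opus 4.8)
The plan is to transfer the entire question to the four-dimensional slice $A$, where $\gamma$ becomes a product of linear forms and the geometry is transparent, and then to lift the resulting divisibility back to $H_{4}$. Since $GA$ is dense, restriction to $A$ is an isomorphism from the $G$-invariant polynomials on $H_{4}$ onto the $W$-invariant polynomials on $A$ (injectivity is the fact that $G$-invariants are determined by their restriction to $A$, and surjectivity is the fact that every $W$-invariant on $A$ is a polynomial in the $\mathcal{E}_{j}$, which extend to $G$-invariants). So it suffices to prove $\gamma_{|A}\mid f_{|A}$ and then transport the quotient. Recall that $W=\{g\in G\mid gA=A\}_{|A}$ is the group of even signed permutations of the $u_{j}$, i.e. the Weyl group of $D_{4}$, whose reflecting hyperplanes are exactly $\{z_{i}=z_{j}\}$ and $\{z_{i}=-z_{j}\}$ for $i<j$. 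From $\Omega\cap A=\{z\mid z_{i}\neq\pm z_{j}\}$ we see that $H_{4}-\Omega$ meets $A$ in precisely the union of these twelve hyperplanes, which is the zero set of $\delta=\prod_{i<j}(z_{i}-z_{j})(z_{i}+z_{j})$, and that $\gamma_{|A}=\delta^{2}$.

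The heart of the argument, and the step I expect to be the main obstacle, is to show that $f_{|A}$ is divisible not merely by $\delta$ but by its square $\delta^{2}=\gamma_{|A}$. By hypothesis $f$ vanishes on $H_{4}-\Omega$, so $f_{|A}$ vanishes on every hyperplane $H_{\alpha}=\{\ell_{\alpha}=0\}$, where $\ell_{\alpha}$ runs over the linear forms $z_{i}\pm z_{j}$. Vanishing on $H_{\alpha}$ alone would only produce the first power $\ell_{\alpha}$; the extra order comes from invariance. Fixing a reflection $s_{\alpha}\in W$ and choosing linear coordinates in which $\ell_{\alpha}$ is a coordinate $t$ with $s_{\alpha}\colon t\mapsto -t$ and the remaining coordinates fixed, the $W$-invariance of $f_{|A}$ forces $f_{|A}$ to be an even polynomial in $t$. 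Since it also vanishes at $t=0$, it has neither a constant nor a linear term in $t$, so $\ell_{\alpha}^{2}$ divides $f_{|A}$. As the twelve forms $\ell_{\alpha}$ are pairwise non-proportional irreducibles, hence coprime in the polynomial ring, their squares multiply to give $\prod_{\alpha}\ell_{\alpha}^{2}=\delta^{2}=\gamma_{|A}$ as a divisor of $f_{|A}$.

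It then remains only to lift this back to $H_{4}$. Write $f_{|A}=\gamma_{|A}\,p$; since $\mathbb{C}[A]$ is an integral domain and both $f_{|A}$ and $\gamma_{|A}$ are $W$-invariant, the quotient $p$ is again a $W$-invariant polynomial on $A$. By the restriction isomorphism, $p$ is the restriction of a unique $G$-invariant polynomial $\tilde p$ on $H_{4}$. Then $\gamma\tilde p$ is $G$-invariant and agrees with $f$ on $A$, hence on the dense set $GA$, so $f=\gamma\tilde p$ identically on $H_{4}$. Therefore $\gamma$ divides $f$, as claimed.
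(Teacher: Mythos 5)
Your proof is correct, but it reaches the crucial square by a genuinely different mechanism than the paper. The paper factors once globally: since $f_{|A}$ vanishes on all the hyperplanes, $f_{|A}=\Delta g$ with $\Delta=\prod_{i<j}(z_{i}-z_{j})(z_{i}+z_{j})$; it then checks the alternating property $\Delta(sz)=\det(s)\Delta(z)$ for $s\in W$, deduces that $g$ transforms by $\det$, hence that $g$ itself vanishes on the same hyperplanes (a skew polynomial vanishes on the fixed hyperplane of each reflection), so $\Delta\mid g$ and $\Delta^{2}\mid f_{|A}$. You instead argue locally at each reflection: invariance under $s_{\alpha}\in W$ makes $f_{|A}$ even in the coordinate $\ell_{\alpha}$, vanishing on $\{\ell_{\alpha}=0\}$ then forces $\ell_{\alpha}^{2}\mid f_{|A}$, and coprimality of the twelve pairwise non-proportional linear forms in the unique factorization domain $\mathbb{C}[A]$ assembles these into $\delta^{2}=\gamma_{|A}$. (Your implicit use of $s_{e_{i}+e_{j}}\in W$ is legitimate: it flips two signs, hence is an even signed permutation.) The two arguments are essentially dual --- the paper's ``skew implies vanishing'' step is exactly your parity argument applied to the odd part --- but yours avoids introducing the anti-invariant $\Delta$ and its determinant character, at the cost of invoking coprimality to multiply the local factors. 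You also make explicit a step the paper glosses over: the paper concludes with divisibility of the restriction, $\Delta^{2}\mid f_{|A}$, whereas the proposition asserts divisibility in the polynomial ring on $H_{4}$; your lift of the quotient through the restriction isomorphism between $G$-invariants and $W$-invariants (injective by density of $GA$, surjective because every $W$-invariant is a polynomial in the $\mathcal{E}_{j}$, which extend), followed by a second use of density of $GA$ to get $f=\gamma\tilde{p}$ on all of $H_{4}$, supplies exactly the missing bookkeeping, so on this point your write-up is the more complete one.
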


\begin{proof}
Since $f(z)=0$ if $z_{i}=\pm z_{j}$ for $i\neq j$ we see that $f$ is divisible
by $z_{i}-z_{j}$ and $z_{i}+z_{j}$ for $i<j$. Thus if
\[
\Delta(z)={\displaystyle\prod\limits_{i<j}}
(z_{i}-z_{j})(z_{i}+z_{j})
\]
then $f=\Delta g$ with $g$ a polynomial on $A$. One checks that $\Delta
(sz)=\det(s)\Delta(z)$ for $s\in W$ (see the notation in the previous
section). Since $f(sz)=f(z)$ for $s\in W$ we see that $g(sz)=\det(s)g(z)$ for
$s\in W.$ But this implies that $g(z)=0$ if $z_{i}=\pm z_{j}$ for $i\neq j$.
So $g$ is also divisible by $\Delta$. We conclude that $f$ is divisible by
$\Delta^{2}$. This is the content of the theorem.
\end{proof}

\begin{lemma}
$\gamma$ is an irreducible polynomial.
\end{lemma}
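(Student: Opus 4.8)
The plan is to prove irreducibility on $H_{4}$ by transporting the question to the four-dimensional slice $A$, where $\gamma$ restricts to the completely explicit polynomial $\gamma|_{A}=\Delta(z)^{2}=\prod_{i<j}(z_{i}-z_{j})^{2}(z_{i}+z_{j})^{2}$. Two facts make this transport faithful. First, because $G=SL(2,\mathbb{C})^{\otimes 4}$ is connected and acts by invertible linear maps, each $g\in G$ permutes the finitely many irreducible factors of the $G$-invariant $\gamma$ up to scalars; connectedness forces that permutation to be trivial, so every factor $p$ satisfies $p\circ g=\lambda(g)p$ for a rational character $\lambda$ of $G$. Since $SL(2,\mathbb{C})$ has no nontrivial characters, neither does $G$, and hence every irreducible factor of $\gamma$ is actually $G$-invariant. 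Second, as already used for the $\mathcal{E}_{j}$, a $G$-invariant polynomial is determined by its restriction to $A$ (because $GA$ is dense), this restriction is $W$-invariant, and a nonconstant $G$-invariant cannot restrict to a constant on $A$ (otherwise it would be constant on the dense set $GA$, hence everywhere). Thus it suffices to show that $\Delta^{2}$ admits no factorization into two nonconstant $W$-invariant polynomials on $A$.

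On $A$ the polynomial ring is a UFD and $\Delta^{2}=\prod_{\alpha>0}\alpha^{2}$, where the twelve linear forms $\alpha$ are the positive roots $z_{i}\pm z_{j}$ of the root system $D_{4}$, each occurring with multiplicity exactly two. If $P$ is a nonconstant $W$-invariant factor of $\Delta^{2}$, unique factorization forces $P=c\prod_{\alpha>0}\alpha^{b_{\alpha}}$ with $0\le b_{\alpha}\le 2$. Now $W$ is precisely the Weyl group $W(D_{4})$ — its generating reflections correspond to $u_{0}-u_{1},u_{1}-u_{2},u_{2}-u_{3},u_{2}+u_{3}$ — and $W(D_{4})$ permutes the roots $z_{i}\pm z_{j}$ transitively up to sign; indeed coordinate permutations together with even sign changes already carry $z_{1}-z_{2}$ to every other root. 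Invariance of $P$ therefore forces $\alpha\mapsto b_{\alpha}$ to be constant on this single orbit, so $P=c\Delta^{b}$ with $b\in\{0,1,2\}$. Since $\Delta(sz)=\det(s)\Delta(z)$ (as noted in the proof of the Proposition) and $W$ contains reflections, $\Delta$ itself is not $W$-invariant, so $b$ must be even. This excludes $b=1$, leaving only $b=0$ (making $P$ constant) or $b=2$ (making the complementary factor constant). Hence $\Delta^{2}$ has no nontrivial $W$-invariant factorization, and by the reduction above $\gamma$ is irreducible.

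The step I expect to require the most care is the passage from $H_{4}$ to $A$: one must verify both that the irreducible factors of $\gamma$ are genuinely $G$-invariant rather than merely semi-invariant, and that restriction to $A$ neither collapses a nonconstant invariant to a constant nor hides factorization data. Once the identification $W=W(D_{4})$ and the transitivity of its action on the roots are in place, the combinatorics on $A$ is routine. The real crux is the observation that the square root $\Delta$ of $\gamma|_{A}$ is only a $\det$-semi-invariant of $W$, which is exactly the obstruction that prevents $\gamma$ from splitting.
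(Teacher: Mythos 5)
Your proposal is correct and follows essentially the same route as the paper: you show the irreducible factors of $\gamma$ are $G$-invariant via connectedness and the triviality of characters of $G$ (the paper uses $G=[G,G]$, an equivalent fact), restrict to $A$, use unique factorization into the linear forms $z_i\pm z_j$ together with transitivity of $W$ on these forms, and rule out the exponent $1$ via the semi-invariance $\Delta(sz)=\det(s)\Delta(z)$. Your parity-of-$b$ packaging is a minor reorganization of the paper's ``divisible by $\Delta$, hence by $\Delta^{2}$'' step, not a different argument.
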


\begin{proof}
Let $\gamma=\gamma_{1}\gamma_{2}\cdots\gamma_{m}$ be a factorization into
irreducible (non-constant) polynomials. If $g\in G$ then since the
factorization is unique up to order and scalar multiple there is for each
$g\in G$, a permutation $\sigma(g)\in S_{m}$ and $c_{i}(g)\in%
\mathbb{C}
-\{0\}$, $i=1,...,m$ such that $\gamma_{j}\circ g^{-1}=c_{j}(g)\gamma
_{\sigma(g)j}$ for $j=1,...,m$. The map $g\longmapsto\sigma(g)$ is a group
homomorphism. The kernel of $\sigma$ is a closed subgroup of $G$. Thus
$G/\ker\sigma$ is a finite group that is a continuous image of $G$. So it must
be the group with one element since $G$ is connected. \ This implies that each
$\gamma_{j}$ satisfies $\gamma_{j}\circ g^{-1}=c_{j}(g)\gamma_{j}$for all
$g\in G$. We therefore see that $c_{j}:G\rightarrow%
\mathbb{C}
-\{0\}$ is a group homomorphism for each $j$. But the commutator group of $G$
is $G$. Thus $c_{j}(g)=1$ for all $g$. This implies that each of the factors
$\gamma_{j}$ is invariant under $G$. Now each $\gamma_{j|A}$ divides
$\gamma_{|A}$ thus in must be a product
\[{\displaystyle\prod\limits_{i<j}}
(z_{i}-z_{j})^{a_{ij}}(z_{i}+z_{j})^{b_{ij}}
\]
by unique factorization. We note that if $i<j$ then $\{(z_{i}+z_{j})\circ
s|s\in W\}=\{(z_{i}-z_{j})\circ s|s\in W\}=\{\varepsilon(z_{i}+z_{j}%
)|i<j,\varepsilon\in\{\pm1\}\}\cup\{z_{i}-z_{j}|i\neq j\}$. This implies that
since $\gamma_{j|A}$ is non-constant and $W$-invariant that each $\gamma
_{j|A}$ must be divisble by $\Delta_{|A}$. Now arguing as in the previous
propositin the invariance implies that $\gamma_{j}$ is divisible by
$\Delta^{2}=\gamma$. This implies that $m=1$.
\end{proof}

\begin{theorem}
If $f$ is a polynomial on $H_{4}$ such that $f(\phi)\neq0$ for $\phi\in\Omega$
then there exists $c\in%
\mathbb{C}
$, $c\neq0$ and $r$ such that $f=c\gamma^{r}$.
\end{theorem}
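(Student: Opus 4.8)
The plan is to recast the hypothesis as an inclusion of zero sets and then let unique factorization finish the job, using the two facts already in hand: the explicit description $\Omega=\{\phi\in H_{4}\mid\gamma(\phi)\neq0\}$ and the irreducibility of $\gamma$ established in the preceding Lemma. Writing $V(f)=\{\phi\in H_{4}\mid f(\phi)=0\}$, the assumption that $f$ is nowhere zero on $\Omega$ says exactly that $V(f)\subseteq H_{4}-\Omega=V(\gamma)$. So the entire content is to show that a polynomial whose vanishing locus sits inside the irreducible hypersurface $V(\gamma)$ must be a scalar times a power of $\gamma$.

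First I would apply the Nullstellensatz. Since $H_{4}\cong\mathbb{C}^{16}$ is affine space over the algebraically closed field $\mathbb{C}$, the inclusion $V(f)\subseteq V(\gamma)$ means that $\gamma$ vanishes on $V(f)$, so $\gamma\in I(V(f))=\sqrt{(f)}$. Hence $\gamma^{N}\in(f)$ for some $N\geq1$; that is, $f$ divides $\gamma^{N}$ in the polynomial ring on $H_{4}$. Second, I would invoke unique factorization: this ring is a UFD and, by the Lemma, $\gamma$ is irreducible, so the only divisors of $\gamma^{N}$ up to nonzero scalars are the powers $\gamma^{r}$ with $0\leq r\leq N$. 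Since $f\mid\gamma^{N}$, this forces $f=c\gamma^{r}$ with $c\neq0$, the value $r=0$ covering the degenerate case where $f$ is a nonzero constant. This is precisely the claim.

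The genuine difficulty is concentrated entirely in the irreducibility of $\gamma$, which is the main obstacle and has already been dispatched by the Lemma; everything downstream of it is formal. If one prefers to avoid quoting the Nullstellensatz, the same conclusion follows by factoring $f=c\prod_{i}p_{i}^{a_{i}}$ into irreducibles and noting that each $V(p_{i})\subseteq V(\gamma)$ is an inclusion of irreducible hypersurfaces, hence an equality, so that each irreducible $p_{i}$ is proportional to $\gamma$; collecting multiplicities again yields $f=c\gamma^{r}$. In the $G$-invariant setting of the abstract one can instead descend to $A$, where $f$ is determined by $f|_{A}$ via the density of $GA$, and combine the divisibility argument of the Proposition with the transformation law $\Delta(sz)=\det(s)\Delta(z)$ together with induction on degree; the only subtlety there is tracking that each successive quotient by $\gamma$ remains $W$-invariant and nonvanishing on $\Omega$.
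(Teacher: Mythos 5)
Your proposal is correct and takes essentially the same approach as the paper: the paper's proof is exactly your fallback argument (factor $f$ into irreducibles $h$, note $V(h)\subseteq V(\gamma)=H_4-\Omega$, and use irreducibility of $V(\gamma)$ plus the dimension count to force equality, whence $h$ is a scalar multiple of $\gamma$). Your primary Nullstellensatz-plus-UFD phrasing ($\gamma\in\sqrt{(f)}$, so $f\mid\gamma^{N}$) is only a cosmetic repackaging of the same content, with all the real work residing, as you say, in the Lemma that $\gamma$ is irreducible.
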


\begin{proof}
We may assume that $f$ is non-constant. Let $h$ be an irreducible factor of
$f$. Then $h(\phi)\neq0$ if $\phi\in\Omega$. This implies that the irreducible
variety $Y=\{x\in H_{4}|h(x)=0\}\subset H_{4}-\Omega=\{x\in H_{4}%
|\gamma(x)=0\}.$ Since the variety $H_4 - \Omega$ is irreducible, and since both varieties are of dimension $15$ over $%
\mathbb{C}
$ they must be equal. This implies that $h$ must be a multiple of $\gamma$.
Since $f$ factors into irreducible non-constant factors the theorem follows.
\end{proof}

\section{Discussion}
In this note we have shown that the set of all 4-qubit SL-invariant polynomials that are also invariant under permutations
of the qubits is generated by four polynomials of degrees 2,6,8,12. Using a completely different approach, in~\cite{Jens} it was also shown that these polynomials exist but they were not given elegantly as in Eq.~(\ref{fpoly}). In addition, we have shown here that the Hyperdeterminant~\cite{Miy03} is the \emph{only} SL-invariant polynomial (up to its powers)
that is not vanishing precisely on the set of generic states.

Since the Hyperdeterminant (in our notations $\gamma(z)$) quantifies generic entanglement, a state with the most amount of generic entanglement can be defined as a state, $z$, that maximize $|\gamma(z)|$. We are willing to conjecture that the state
$$
|L\rangle=\frac{1}{\sqrt{3}}\left(u_0+\omega u_1+\bar{\omega}u_2\right)\;,\;\;\omega\equiv e^{i\pi/3}\;,
$$
is the unique state (up to a local unitary transformation) that maximizes $|\gamma(z)|$. Our conjecture is based on numerical tests. In addition, one can prove that $|L\rangle$ is a critical point for $|L\rangle$ in the sphere and the Hessian on the sphere is negative semi-definate.

It was shown in~\cite{GW} that the state $|L\rangle$ maximizes uniquely many measures of 4 qubits entanglement. Moreover, one can easily check that the state $|L\rangle$ is the only state for which $\mathcal{E}_{0}=\mathcal{E}_{1}=\mathcal{E}_{2}=0$ while 
$\mathcal{E}_{3}(|L\rangle)=1/9$. It is known that a state with such a property is unique and called cyclic~\cite{K1,K2}. Similarly, we found out the unique state, $|F\rangle$ for which $\mathcal{F}
_{1}(|F\rangle)=\mathcal{F}_{3}(|F\rangle)=\mathcal{F}_{4}(|F\rangle)=0$ but $\mathcal{F}_{6}(|F\rangle)\neq 0$. The (non-normilized) unique state (up to a local unitary transformation) is
$$
|F\rangle=(3-\sqrt{3})u_0+(1+i)\sqrt{3}u_1+(1-i)\sqrt{3}u_2-i(3-\sqrt{3})u_3\;.
$$

\emph{Acknowledgments:---} GG research is supported by NSERC.

\end{document}